\newcommand{\C}{\mathbb{C}}
\newcommand{\Z}{\mathbb{Z}}
\newcommand{\N}{\mathbb{N}}
\newcommand{\G}{\mathbb{G}}
\newcommand{\Pb}{\mathbb{P}}
\newcommand{\Sb}{\mathbb{S}}
\newcommand{\K}{\mathbb{K}}
\newcommand{\M}{\mathbb{M}}
\newcommand{\ket}{\rangle}
\newcommand{\bra}{\langle}
\newcommand{\Bi}{\mathcal{B}}
\newcommand{\Hi}{\mathcal{H}}
\newcommand{\Ei}{\mathcal{E}}
\newcommand{\Oi}{\mathcal{O}}
\newcommand{\Ti}{\mathcal{T}}
\newcommand{\GH}{\mathfrak{H}}
\newcommand{\GT}{\mathfrak{T}}
\newcommand{\Tr}{\operatorname{Tr}}
\newcommand{\SU}{\operatorname{SU}}
\newcommand{\Un}{\operatorname{U}}
\newtheorem{thm}{Theorem}
\newtheorem{Lemma}[thm]{Lemma}
\newtheorem{prop}[thm]{Proposition}
\newtheorem{cor}[thm]{Corollary}
\theoremstyle{definition}
\newtheorem{dfn}[thm]{Definition}
\newtheorem{Notation}[thm]{Notation}
\newtheorem{Remark}[thm]{Remark}
\newtheorem{Example}[thm]{Example}
\newcommand{\bone}{\mathbf{1}}
\title{Dequantization via quantum channels}
\author{Andreas Andersson}
\affil{\small Email: fornjotnr@hotmail.com}
\affil{\footnotesize Max Planck Institute for Mathematics in the Sciences, Inselstrasse 22, D-04103 Leipzig, Germany
\\Wollongong University, School of Mathematics and Applied Statistics, 2522 Wollongong, Australia}
\affil{Mathematics Subject Classification 2010 Primary: 81P16; Secondary: 94A40, 81R60}
\affil{Keywords: Quantum channels, completely positive maps, large-$N$ limit, Stinespring representation, quantization, noncommutative geometry, compact quantum groups, subproduct systems, complementary channel, Kraus operators.}
\begin{document}
\maketitle
\abstract
For a unital completely positive map $\Phi$ (``quantum channel") governing the time propagation of a quantum system, the Stinespring representation gives an enlarged system evolving unitarily. We argue that the Stinespring representations of each power $\Phi^m$ of the single map together encode the structure of the original quantum channel and provides an interaction-dependent model for the bath. The same bath model gives a ``classical limit" at infinite time $m\to\infty$ in the form of a noncommutative ``manifold" determined by the channel. In this way a simplified analysis of the system can be performed by making the large-$m$ approximation. These constructions are based on a noncommutative generalization of Berezin quantization. The latter is shown to involve very fundamental aspects of quantum-information theory, which are thereby put in a completely new light. 

\section*{Introduction}
One ``classical limit" of a quantum system is achieved (or rather defined) by taking the $\hbar\to 0$ limit. The term ``classical" is motivated by the behavior of the Feynman path integral and by the Heisenberg commutation relations. Another option is provided by the fact that a ``classical" phase space is always embedded in the quantum one via the coherent states. A ``classical limit" can also be defined by taking the limit $m\to\infty$ of some quantum number $m$, or as the high-temperature limit. Indeed, a physical meaning to $\hbar\to 0$ can only be given if we isolate the physical parameters of the system which appear together with $\hbar$ in all expressions. Both $\hbar\to 0$ and coherent states are well known in the mathematical literature, although there the focus has been almost exclusively on the so-called ``quantization" process starting from a classical phase space. While quantization can give some indirect insights, it is of little direct physical relevance. On the other hand, taking the classical limit (to be referred here to as \textbf{dequantization}) often leads to simplifications which makes it easier to deduce the behavior of a quantum system. Nevertheless, quantization in the mathematical sense provides a much more rigid and well-defined structure leaving less room for arbitrariness.

It is therefore attractive to have a rather unified framework for the classical limit which is in the spirit of mathematical treatments of quantization. The approach of Berezin \cite{Bere2}, \cite{Bere3}, \cite{Schl1} gives the classical limit both in terms of coherent states and large quantum numbers. However, there have been no guidelines for how to apply it systematically. In this paper we outline how to do that, but we shall need a recent generalization of Berezin quantization \cite{An6} in order to cover realistic physics.

The Heisenberg-picture evolution of a quantum system is in almost all applications describable by a quantum channel \cite{BP}, \cite{Lind1}, \cite{AlLe1}, i.e. a unital completely positive map $\Phi:\Bi(\Hi_0)\to\Bi(\Hi_0)$ on the algebra $\Bi(\Hi_0)$ of all bounded operators on the system Hilbert space $\Hi_0$. Equivalently, the Schrödinger-picture evolution is governed by a trace-preserving completely positive map $\Phi_*:\Bi(\Hi_0)_*\to\Bi(\Hi_0)_*$, again called a quantum channel, on the space $\Bi(\Hi_0)_*$ of trace-class operators.  One may regard the application $\rho\to\Phi_*(\rho)$ for a density matrix $\rho$ as the transformation taking $\rho$ from time $t=0$ to some later time $t=\tau$. The condition on complete positivity (and even positivity) may fail if we try to divide the interval $[0,\tau]$, i.e. there may be no maps $\Phi_{\tau-t}$, $\Phi_t$ such that $\Phi_{\tau-t}\circ\Phi_t=\Phi$ \cite{WoCi1}. Nevertheless, the discrete-time semigroup $(\Phi^m)_{m\in\N_0}$ consisting of the powers $\Phi^m=\Phi\circ\cdots\Phi$ of the single map $\Phi$ gives a Markovian description of the evolution. The map $\Phi$ takes a special role in the dynamical semigroup and is determined as corresponding to ``smallest" time duration for which the evolution is given by a quantum channel (or by the time discretization given by experimental data points).

The aim of this paper is to associate a classical limit to the evolution $\Phi^m$, obtained in the infinite-time limit $m\to\infty$. This limit coincides with the large-$N$ limit of the bath, where $N$ is the dimension of the bath Hilbert space. The structure of the limit is determined by the Kraus operators representing $\Phi$ as in equation \eqref{Krausrep} below, although the limit does not depend on the choice of such a representation. In particular, if these Kraus operators commute then the classical limit is encoded in a compact manifold (the ``dequantization manifold"). If they fail to commute then one obtains a ``manifold" $\M$ which has to be understood in the sense of noncommutative geometry \cite{Co}. In any case, there is a lot of geometry in the classical limit, and it will be studied in detail elsewhere (e.g. \cite{An6}). 

For finite time $m\in\N$, the observable algebra $\Bi(\GH_m)$ of the ``Stinespring bath model" proposed in this paper is a ``fuzzy version" \cite{MuSa1}, \cite{BKV1} of the dequantization manifold $\M$. Also these objects carry a lot of geometric information and are in some cases known in the mathematical literature. For most noncommutative manifolds $\M$, the geometry of their fuzzifications has to be developed, and this is work in progress. 

\section{The main point}\label{mainpointsec}
Let $\Hi_0$ be a Hilbert space with a countable basis. We consider a quantum channel $\Phi:\Bi(\Hi_0)\to\Bi(\Hi_0)$, i.e. a completely positive map such that $\Phi(\bone)=\bone$, where $\bone$ is the identity operator. Any such $\Phi$ has a \textbf{Kraus representation}, i.e. there are bounded operators $K_1,\dots,K_n$ for some $n\in\N\cup\{\infty\}$ such that \cite{Choi1}
\begin{equation}\label{Krausrep}
\Phi(A)=\sum^n_{k=1}K_k^*AK_k,\qquad \forall A\in\Bi(\Hi_0),
\end{equation}
and the ``unitality" $\Phi(\bone)=\bone$ says that
\begin{equation}\label{unitalityKraus}
\sum^n_{k=1}K_k^*K_k=\bone.
\end{equation}
We shall assume $n<\infty$ throughout, and that $n$ is the minimal integer for which there exists an $n$-tuple of operators $K_1,\dots,K_n$ such that \eqref{Krausrep} holds. A corollary to Stinespring's theorem says that there is a Hilbert space $\GH$ of dimension $n$ and a unitary operator $W\in\Bi(\Hi_0\otimes\GH)$ such that
\begin{equation}\label{unitarymodel}
\Phi(A)=\bra e_1|W^{-1}(A\otimes\bone)We_1\ket,
\end{equation}
where $e_1$ is a unit vector in $\GH$ (see e.g. \cite{CGLM}). Thus, $W$ models the one-step evolution $A\to\Phi(A)$ unitarily on a larger Hilbert space. Suppose next that we want a unitary model for the evolution $A\to\Phi^2(A):=\Phi\circ\Phi(A)$ up to time $t=2$. It will not be possible to find a representation \eqref{unitarymodel} for $\Phi^2$ using $\GH$. Rather, the Stinespring space of the quantum channel $\Phi^2$ is isomorphic to the Hilbert space $\GH_2$ obtained from the span of sums of products $K_j^*K_k^*$ of (the adjoints of) two Kraus operators for the original map $\Phi$. We have
\begin{equation}\label{inclusion2}
\GH_2\subseteq\GH^{\otimes 2},
\end{equation}
and it is the relations among the Kraus operators $K_j$ which determines how big the subspace \eqref{inclusion2} really is. More generally, the Stinespring space of the quantum channel $\Phi^m$ is identifiable with the span $\GH_m$ of products $K_{j_m}^*\cdots K_{j_1}^*$ and (see the proof of \cite[Thm. 1.12]{Ar3})
\begin{equation}\label{inclusionintensor}
\GH_m\subseteq\GH^{\otimes m},\qquad \forall m\in\N_0.
\end{equation} 
\begin{equation}\label{inclusionm}
\GH_{m+l}\subseteq\GH_m\otimes\GH_l,\qquad \forall m,l\in\N_0.
\end{equation} 
Here $\N_0:=\{0,1,2,\dots\}$ and $\GH_0:=\C$. The whole semigroup $(\Phi^m)_{m\in\N_0}$ can then be obtained by tracing over a unitary dynamics $\bigoplus_m W_m$ on $\Hi_0\otimes\GH_\N$, where $\GH_\N$ is the Hilbert space
\begin{equation}\label{PhiFock}
\GH_\N:=\bigoplus_{m\in\N_0}\GH_m,
\end{equation} 
which we refer to as the \textbf{$\Phi$-Fock space}. 
\begin{Example}[Bosons]\label{fullcommute}
If the Kraus operators satisfy $[K_j,K_k]=0$ for all $j,k=1,\dots,n$, but no other relations, then 
$$
\GH_2=\GH^{\vee 2}:=\GH^{\otimes 2}\ominus\{e_j\otimes e_k-e_k\otimes e_j\}_{1\leq j<k\leq n},
$$
where $e_1,\dots,e_n$ denotes an orthonormal basis for $\GH$. In that case, $\GH_m$ is the subspace of totally symmetric vectors, and $\GH_\N=\GH^{\vee\N}$ as in \eqref{PhiFock} is the symmetric (``Bosonic") Fock space over $\GH$. This kind of maps $\Phi$ arises from tracing over the apparatus in a quantum measurement of $n$ commuting observables \cite{BP}. In particular, such $\Phi$'s appear as the reduced evolution of the position state in a (homogeneous) ``quantum walk" in the sense of \cite{ADZ}, \cite{VeAn1}. For space-inhomogeneous quantum walks, noncommuting Kraus operators are needed. 
\end{Example} 
\begin{Example}[White noise]\label{Fullfockexample}
 Kraus operators realized as large random matrices cannot be commuted through each other in any way, and there are no relations whatsoever between them. In that case we have equality $\GH_m=\GH^{\otimes m}$ in \eqref{inclusionintensor} and $\GH_\N=\GH^{\otimes \N}$ is the full (``free" or ``Boltzmannian") Fock space over $\GH$. 
\end{Example} 
To make the discussion about the structure provided by the Stinespring representations of all the $\Phi^m$'s more automatic, we observe that it fits into the following context \cite{ShSo1}, \cite{DRS1}. 
\begin{dfn}\label{Defsubprod}
A collection $\GH_\bullet=(\GH_m)_{m\in\N_0}$ of finite-dimensional Hilbert spaces with $\GH_0=\C$ and for which \eqref{inclusionm} holds is called a \textbf{subproduct system}. A subproduct system $\GH_\bullet$ is \textbf{commutative} if $\GH_m\subseteq\GH^{\vee m}$ for all $m$, where $\vee$ is the symmetric tensor product.
\end{dfn}
\begin{dfn}\label{Stinesubprod}
Let $\Phi$ be a quantum channel on $\Bi(\Hi_0)$. The collection $\GH_\bullet=(\GH_m)_{m\in\N_0}$ of Hilbert spaces such that $\Hi_m:=\Hi_0\otimes\GH_m$ is the minimal Stinespring representation of $\Phi^m$ for each $m$ will be referred to as the \textbf{Stinespring subproduct system} of $\Phi$.
\end{dfn}
An obvious but crucial fact is the following.
\begin{prop}
The Stinespring subproduct system of a quantum channel $\Phi$ is independent of the choice of Kraus representation of $\Phi$.
\end{prop}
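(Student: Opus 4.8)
The plan is to deduce the proposition from the uniqueness, up to unitary equivalence, of the minimal Stinespring dilation of a completely positive map: I would use this first to pin down each Hilbert space $\GH_m$ from $\Phi^m$ alone, and then to check that the subproduct embeddings \eqref{inclusionm} are themselves canonical, so that two Kraus representations of $\Phi$ yield subproduct systems that are isomorphic as subproduct systems.

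First, fix $m\in\N_0$ and note that $\Phi^m$ is manifestly determined by $\Phi$; hence so is its minimal Stinespring dilation $V_m\colon\Hi_0\to\Hi_0\otimes\GH_m$, up to a unitary intertwining the dilating representations. Since the representation $A\mapsto A\otimes\bone$ of $\Bi(\Hi_0)$ has commutant $\bone\otimes\Bi(\GH_m)$ inside $\Bi(\Hi_0\otimes\GH_m)$ — and the minimal Stinespring dimension of $\Phi^m$ is intrinsic to $\Phi^m$ — every such intertwiner has the form $\bone_{\Hi_0}\otimes u_m$ for a unitary $u_m\colon\GH_m\to\GH_m'$. Thus $\GH_m$ is well defined up to unitary, independently of the chosen Kraus representation; in particular $\dim\GH_1=n$ is the minimal Kraus number, consistently with the standing assumption $n<\infty$.

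Next I would recover the subproduct structure intrinsically. From $\Phi^{m+l}=\Phi^m\circ\Phi^l$ and $\Phi^l(A)=V_l^*(A\otimes\bone_{\GH_l})V_l$ one obtains
\begin{equation*}
\Phi^{m+l}(A)=\bigl((V_l\otimes\bone_{\GH_m})V_m\bigr)^*\,\bigl(A\otimes\bone_{\GH_l\otimes\GH_m}\bigr)\,\bigl((V_l\otimes\bone_{\GH_m})V_m\bigr),
\end{equation*}
so $(V_l\otimes\bone_{\GH_m})V_m$ is a (generally non-minimal) Stinespring dilation of $\Phi^{m+l}$ on $\Hi_0\otimes\GH_l\otimes\GH_m$, and compressing to its minimal part reproduces the inclusion \eqref{inclusionm}. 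Because minimal dilations are unique up to unitary and the composition above is compatible with such unitaries, the family $(u_m)_{m}$ from the previous paragraph can be chosen coherently, i.e. so that $u_{m+l}$ is the restriction of $u_l\otimes u_m$ to $\GH_{m+l}$; this is exactly an isomorphism of subproduct systems.

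I expect the only genuine work to be bookkeeping — tracking which tensor leg carries which label in the composition formula, and verifying that the $u_m$ fit together rather than merely existing level by level. A more concrete route makes this transparent: two minimal Kraus representations of $\Phi$ have the same length $n$ and are related by a unitary matrix $u\in\Mn_n(\C)$, inducing a unitary $\GH\to\GH'$; then $u^{\otimes m}$ carries the span $\GH_m\subseteq\GH^{\otimes m}$ of products of (adjoints of) Kraus operators onto the corresponding span in $(\GH')^{\otimes m}$ — this being where one invokes the identification behind \cite[Thm.~1.12]{Ar3} — and coherence is immediate from $u^{\otimes(m+l)}=u^{\otimes m}\otimes u^{\otimes l}$.
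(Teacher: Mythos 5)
Your proposal is correct. The paper offers no proof at all here --- it labels the proposition ``an obvious but crucial fact'' --- so there is nothing to diverge from; what you have written is the argument the author is implicitly relying on. Both of your routes work: the abstract one (uniqueness of the minimal Stinespring dilation up to a unitary of the form $\bone_{\Hi_0}\otimes u_m$, since the commutant of $A\mapsto A\otimes\bone$ is $\bone\otimes\Bi(\GH_m)$) establishes each $\GH_m$ up to unitary, and your composition formula $(V_l\otimes\bone_{\GH_m})V_m$ is exactly how the inclusion \eqref{inclusionm} arises, so checking coherence of the $u_m$ there is the right completeness criterion for an isomorphism of subproduct systems. Your concrete route is the cleaner way to nail that coherence: two minimal Kraus families are related by a scalar unitary $u\in\Mn_n(\C)$, the subspace $\GH_m\subseteq\GH^{\otimes m}$ is the orthocomplement of the kernel of the map $e_{j_1}\otimes\cdots\otimes e_{j_m}\mapsto K_{j_m}^*\cdots K_{j_1}^*$, and $u^{\otimes m}$ carries one kernel (hence one orthocomplement) onto the other, with $u^{\otimes(m+l)}=u^{\otimes m}\otimes u^{\otimes l}$ giving compatibility with the embeddings for free. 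The only point worth adding is that the linear span of the Kraus operators is itself intrinsic to $\Phi$, so the argument extends beyond minimal representations, covering the proposition's ``any choice of Kraus representation'' in full.
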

\begin{Remark}[Homogeneous relations] The Stinespring subproduct system $\GH_\bullet$ depends on and only on \emph{homogeneous} relations among the Kraus operators. For instance, a relation such as $K_1K_2=K_3$ will not have any effect on $\GH_\bullet$.
\end{Remark}

The bath model \eqref{PhiFock} is the minimal one needed to obtain a unitary dilation of the entire semigroup $(\Phi^m)_{m\in\N_0}$. It should not be interpreted as describing a bath which exists as a system in its own right. Rather, it contains the interaction degrees of freedom which have an influence on the dynamics on $\Bi(\Hi_0)$ and all quantities associated with it describes the interaction only. As we have seen in Example \ref{fullcommute}, if $\GH_\bullet=\GH^{\vee\bullet}$ then the system interacts with a bath of independent harmonic oscillators. 
\begin{Remark}[Other approaches] 
The so-called ``repeated-interaction model" \cite{AtPa1} was introduced to give a unitary model for the whole semigroup $(\Phi^m)_{m\in\N_0}$, mainly inspired by a very specific experimental setup. It became clear however that it could be useful more generally if the ``bath" is interpreted as an interaction-dependent bath \cite{BBP1}, in the spirit of how we propose to interpret the Stinespring bath. The Stinespring bath has the advantage that it contains so much information about the channel $\Phi$, as we shall see.
\end{Remark}

\section{Inductive limits}
We have obtained a model $\Bi(\GH_\bullet)=(\Bi(\GH_m))_{m\in\N_0}$ for the bath which depends on the quantum channel $\Phi$. It turns out \cite[Thm. 4.8]{An6} that the subproduct structure of $\GH_\bullet$ ensures the existence of unital completely positive maps 
\begin{equation}\label{indlimsyst}
\iota_{m,l}:\Bi(\GH_m)\to\Bi(\GH_{m+l}),\qquad m\leq l
\end{equation}
satisfying $\iota_{r,l}\circ\iota_{m,r}=\iota_{m,l}$ for all $m\leq r\leq l$ and an ``asymptotic multiplicativity" condition. The maps $\iota_{m,l}$ allows us to translate the bath in time. In particular, there is a canonical way of constructing an ``infinite-time limit" ($m\to\infty$) of such a system $(\Bi(\GH_\bullet),\iota_{\bullet,\bullet})$ \cite{BlKi1}, \cite[§B]{Hawk1}. This limit is not the usual algebraic inductive limit where the $\iota_{m,l}$'s are required to be multiplicative, i.e. to satisfy $\iota_{m,l}(AB)=\iota_{m,l}(A)\iota_{m,l}(B)$ for all $A,B\in\Bi(\GH_m)$. Multiplicativity holds iff $\GH_\bullet=\GH^{\otimes\bullet}$, which is not what we want in general.

In order to define the infinite-time limit $C^*$-algebra $\Bi_\infty$, let us introduce the left and right shift operators $S_k$ and $R_j$ on Fock space $\GH_\N$ for $j,k=1,\dots,n$ as
$$
S_k(\psi):=p_{m+1}(e_k\otimes\psi),\qquad R_j(\psi):=p_{m+1}(\psi\otimes e_j)
$$
for all $\psi\in\GH_m\subset\GH_\N$ and all $m\in\N_0$, where $p_m:\GH^{\otimes m}\to\GH_m$ denotes the projection and $e_1,\dots,e_n$ is the orthonormal basis for $\GH$ corresponding to $K_1^*,\dots,K_n^*$. For words $\mathbf{k}=k_1\cdots k_m$ in $\{1,\dots,n\}$ we shall use the notation
$$
S_\mathbf{k}:=S_{k_1}\cdots S_{k_m},\qquad S_\mathbf{k}^*:=(S_\mathbf{k})^*=S_{k_m}^*\cdots S_{k_1}^*,
$$
and similarly for $R_\mathbf{k}$ and $K_\mathbf{k}$. The maps \eqref{indlimsyst} can now be defined by
$$
\iota_{m,l}(A):=\sum_{|\mathbf{r}|=l-m}R_\mathbf{r}AR_\mathbf{r}^*\big|_{\GH_l},\qquad \forall A\in\Bi(\GH_m),
$$
where the sum is over all multi-indices of length $l-m$. 
\begin{dfn}
The \textbf{Toeplitz algebra} $\Ti_\GH$ is the $C^*$-algebra generated by the left shifts $S_1,\dots,S_n$, and the \textbf{Toeplitz core} $\Ti_\GH^{(0)}$ is the $C^*$-subalgebra of $\Ti_\GH$ generated by $S_jS_k^*$ for all $j,k=1,\dots,n$. 
\end{dfn}

The Toeplitz core $\Ti_\GH^{(0)}$ in fact coincides with the algebra of all sequences $A_\bullet=(A_m)_{m\in\N_0}$ with $A_m\in\Bi(\GH_m)$ such that 
$$
\iota_{m,l}(A_m)=A_l,\qquad \forall l\geq m
$$
for some sufficiently large $m$ \cite[Lemma 4.10]{An6}. Such a sequence $A_\bullet$ may be called ``eventually constant" (under $\iota_{\bullet,\bullet}$). Let $\Gamma_0\subset\Ti_\GH^{(0)}$ be the ideal of all sequences which are eventually $0$. Then
\begin{equation}\label{indlimdef}
\Bi_\infty:=\Ti_\GH^{(0)}/\Gamma_0
\end{equation}
is the desired infinite-time limit $C^*$-algebra, referred to as the (generalized) \textbf{inductive limit} of the sequence $(\Bi(\GH_\bullet),\iota_{\bullet,\bullet})$.

\begin{Remark}\label{bimodrem}
Similarly one constructs inductive limits $\Ei_k$ of the sequences $\Bi(\GH_\bullet)\otimes\GH_k$ for all $k\in\Z$, where $\GH_{-m}:=\GH_m^*$ for $m>0$. Each $\Ei_k$ is a bimodule over $\Bi_\infty$, and it contains $\GH_k$ as a subspace. 
 Take $m\in\N$. Letting the left multiplication operator on $\Ei_m$ defined by $f\in\Bi_\infty$ be denoted simply by $f$, we define the level-$m$ \textbf{Toeplitz operator} with symbol $f$ to be the operator $\breve{\varsigma}^{(m)}(f)$ on $\GH_m$ given by
\begin{equation}\label{Toepop}
\breve{\varsigma}^{(m)}(f)\psi:=\Pi_m(f\psi),\qquad\forall \psi\in\GH_m,
\end{equation}
where $\Pi_m:\Ei_m\to\GH_m$ is the projection (which is necessary because the action of $f$ does not preserve $\GH_m$ unless $f\propto \bone$).
\end{Remark}

\begin{Remark}\label{remarkcovsymb}
The adjoint $\varsigma^{(m)}:\Bi(\GH_m)\to\Bi_\infty$ of the Toeplitz map $\breve{\varsigma}^{(m)}$ with respect to suitable inner product on $\Bi_\infty$ and $\Bi(\GH_m)$ simply takes a normally ordered element $S_jS_k^*$ of $\Ti_\GH^{(0)}$ to its image in the quotient \eqref{indlimdef}. We refer to $\varsigma^{(m)}(A)$ as the \textbf{covariant symbol} of an operator $A\in\Bi(\GH_m)$; cf. Remark \ref{projvarrem} below. 
\end{Remark}

In the following, when we say the ``algebraic part" of $\Bi_\infty$ we mean the $*$-algebra $^{0}\Bi_\infty$ generated by the images of the shifts $S_jS_k^*$ in $\Ti_\GH^{(0)}/\Gamma_0$ without taking norm closure. 

\begin{thm}[{\cite[Cor. 5.26]{An6}}]\label{thmstrictquant}
The sequence $(\Bi(\GH_m),\breve{\varsigma}^{(m)})_{m\in\N_0}$ is a strict quantization of $\Bi_\infty$ in the sense that for all $f,g\in {^{0}\Bi_\infty}$ we have (cf. \cite[Def. 1.1.1]{Lan1})
\begin{enumerate}[(i)]
\item{$\lim_{m\to\infty}\|\breve{\varsigma}^{(m)}(f)\|=\|f\|$ \textnormal{(Rieffel's condition)},}
\item{$\lim_{m\to\infty}\|\breve{\varsigma}^{(m)}(fg)-\breve{\varsigma}^{(m)}(f)\breve{\varsigma}^{(m)}(g)\|=0$ \textnormal{(von Neumann's condition)},}
\item{$\lim_{m\to\infty}\|m^{-1}[\breve{\varsigma}^{(m)}(f),\breve{\varsigma}^{(m)}(g)]-\{f,g\}\|=0$ \textnormal{(Dirac's condition)},}
\end{enumerate} 
and for each $m\in\N$, every operator in $\Bi(\GH_m)$ is of the form $\breve{\varsigma}^{(m)}(f)$ for some $f\in{^{0}\Bi_\infty}$. Here $\{\cdot,\cdot\}$ is the Poisson bracket defined in op. cit.). 
\end{thm}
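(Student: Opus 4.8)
\begin{sop}
The plan is to establish the four assertions one at a time, in each case reducing to the normally ordered monomials $[S_{\mathbf j}S_{\mathbf k}^*]$ which span ${}^{0}\Bi_\infty$, and to feed in the two structural facts already on record: the identification of the Toeplitz core $\Ti_\GH^{(0)}$ with the eventually $\iota$-constant sequences (\cite[Lemma 4.10]{An6}) and the asymptotic multiplicativity of the connecting maps $\iota_{\bullet,\bullet}$ (\cite[Thm. 4.8]{An6}). We use freely that each $\breve\varsigma^{(m)}$ is a unital completely positive contraction $\Bi_\infty\to\Bi(\GH_m)$, being the compression to $\GH_m$ of the left $\Bi_\infty$-action on the bimodule $\Ei_m$; that $\varsigma^{(m)}$ is its adjoint for the inner products of Remarks \ref{bimodrem}--\ref{remarkcovsymb}; and we write $\widehat\beta_m:=\varsigma^{(m)}\circ\breve\varsigma^{(m)}$ for the induced Berezin transform on symbols and $\beta_m:=\breve\varsigma^{(m)}\circ\varsigma^{(m)}=(\varsigma^{(m)})^{*}\varsigma^{(m)}$ for the one on operators.

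First I would dispose of the surjectivity statement, by the standard Berezin dimension count. The covariant-symbol map $\varsigma^{(m)}$ is injective, because the reproducing kernel of the genuine Hilbert space $\GH_m$ is non-degenerate; hence $V_m:=\varsigma^{(m)}(\Bi(\GH_m))$ is a subspace of ${}^{0}\Bi_\infty$ (Remark \ref{remarkcovsymb}) of dimension $(\dim\GH_m)^2$, and $\beta_m=(\varsigma^{(m)})^{*}\varsigma^{(m)}$ is then strictly positive, hence invertible, on the finite-dimensional algebra $\Bi(\GH_m)$. Consequently $\breve\varsigma^{(m)}$ maps $V_m$ bijectively onto $\Bi(\GH_m)$: every $A\in\Bi(\GH_m)$ is $\breve\varsigma^{(m)}\bigl(\varsigma^{(m)}(\beta_m^{-1}A)\bigr)$ with $\varsigma^{(m)}(\beta_m^{-1}A)\in{}^{0}\Bi_\infty$.

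Next comes Rieffel's condition (i). The bound $\|\breve\varsigma^{(m)}(f)\|\le\|f\|$ is immediate from contractivity of the compression. For the reverse inequality I would prove that $\widehat\beta_m\to\id$ in point-norm on ${}^{0}\Bi_\infty$ -- a concentration statement for the normalized reproducing kernels of $\GH_m\subseteq\GH^{\otimes m}$, which by bilinearity need only be checked on monomials -- and then conclude via $\|f\|=\lim_m\|\widehat\beta_m(f)\|\le\liminf_m\|\breve\varsigma^{(m)}(f)\|$, using that $\varsigma^{(m)}$ is itself norm-contractive (the covariant symbol is a pointwise expectation in coherent states).

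That leaves the von Neumann and Dirac conditions, where the real work lies. For $f,g\in{}^{0}\Bi_\infty$ the defect has the Hankel form $\breve\varsigma^{(m)}(fg)-\breve\varsigma^{(m)}(f)\breve\varsigma^{(m)}(g)=\Pi_m\,f\,(1-\Pi_m)\,g\,\Pi_m\big|_{\GH_m}$, and by bilinearity plus the uniform bound $\|\breve\varsigma^{(m)}\|\le 1$ it suffices to estimate this for monomials $f,g$, where the norm of the Hankel operator $(1-\Pi_m)\,g\,\Pi_m$ is precisely the failure of $\iota_{0,m}$ to be multiplicative on the pair of elements involved and hence tends to $0$ by \cite[Thm. 4.8]{An6}; this is (ii). Condition (iii) needs one further term of the same expansion: namely $\breve\varsigma^{(m)}(f)\breve\varsigma^{(m)}(g)=\breve\varsigma^{(m)}(fg)+m^{-1}\breve\varsigma^{(m)}\bigl(C(f,g)\bigr)+o(m^{-1})$ in operator norm for a bilinear $C$ on ${}^{0}\Bi_\infty$, whence $m^{-1}[\breve\varsigma^{(m)}(f),\breve\varsigma^{(m)}(g)]$ converges to $\breve\varsigma^{(m)}$ applied to $C(f,g)-C(g,f)$, which one identifies with the Poisson bracket $\{f,g\}$ of \cite{Lan1}; antisymmetry is automatic, while the Leibniz rule and the Jacobi identity for $\{\cdot,\cdot\}$ follow from (ii) and from associativity of the operator product, respectively. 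I expect the crux of the whole theorem to be exactly this refined expansion -- establishing that the remainder is genuinely $O(m^{-2})$ in operator norm, uniformly over a spanning set -- since it rests on quantitative off-diagonal decay estimates for the reproducing kernels of the $\GH_m$, which are in turn governed by the homogeneous relations among the Kraus operators. In the commutative case $\GH_\bullet=\GH^{\vee\bullet}$, i.e.\ Berezin--Toeplitz quantization of $\mathbb{CP}^{\,n-1}$, this is the Bordemann--Meinrenken--Schlichenmaier asymptotics; the general noncommutative case is obtained by running the same analysis on the inductive-limit bimodules $\Ei_k$ of Remark \ref{bimodrem}, as done in \cite{An6}.
\end{sop}
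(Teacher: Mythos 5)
You should first be aware that the paper does not prove this theorem: it is imported verbatim from \cite[Cor.~5.26]{An6}, so there is no in-paper argument to compare yours against. Judged on its own terms, your sketch follows the standard Berezin--Toeplitz template (surjectivity by injectivity of the covariant symbol plus invertibility of the finite-dimensional Berezin transform $\beta_m$; Rieffel's condition from contractivity plus $\widehat\beta_m\to\id$ in point-norm; von Neumann's condition from the Hankel identity $\breve\varsigma^{(m)}(fg)-\breve\varsigma^{(m)}(f)\breve\varsigma^{(m)}(g)=\Pi_m f(1-\Pi_m)g\Pi_m$; Dirac's condition from the next order of the product expansion), and that is almost certainly the architecture of the proof in \cite{An6}, whose stated ingredients (the eventually-constant-sequence description of $\Ti_\GH^{(0)}$ and the asymptotic multiplicativity of $\iota_{\bullet,\bullet}$) you invoke in the right places. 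Two small slips: the map whose failure of multiplicativity controls the Hankel norm should be $\iota_{l,m}$ for $l$ the degree of the monomials involved, not $\iota_{0,m}$ (which is trivially multiplicative since $\Bi(\GH_0)=\C$); and your first-order expansion $\breve\varsigma^{(m)}(f)\breve\varsigma^{(m)}(g)=\breve\varsigma^{(m)}(fg)+m^{-1}\breve\varsigma^{(m)}(C(f,g))+o(m^{-1})$ is consistent with the prefactor $m$ (Landsman's $i/\hbar$ with $\hbar=1/m$), not the prefactor $m^{-1}$ you then copy from the theorem statement; you should flag that discrepancy rather than reproduce it.

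The genuine gap is exactly where you say it is, and acknowledging a gap does not close it. Everything up to and including condition (ii) reduces, as you argue, to soft structural facts already on record. Condition (iii) does not: it requires (a) an actual definition of the Poisson bracket $\{\cdot,\cdot\}$ on ${}^{0}\Bi_\infty$ in the noncommutative setting --- your ``one identifies $C(f,g)-C(g,f)$ with the Poisson bracket of op.~cit.''\ is circular unless that bracket is constructed independently or \emph{defined} as this limit, in which case antisymmetry, Leibniz and Jacobi must be verified and the content of (iii) becomes the existence of the limit rather than its identification --- and (b) the quantitative statement that the remainder in the product expansion is $o(m^{-1})$ in operator norm uniformly over a spanning set of monomials. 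In the commutative case this is the Bordemann--Meinrenken--Schlichenmaier kernel asymptotics, but those rest on stationary-phase/peak-section estimates for sections of $L^{\otimes m}$ over a Kähler manifold, and there is no a priori reason the same decay holds for the reproducing kernels of an arbitrary subproduct system $\GH_m\subseteq\GH^{\otimes m}$; indeed one expects the rate to depend on the homogeneous relations among the Kraus operators, as you note. So your text is a correct roadmap but not a proof: the one step you defer is the one that cannot be deferred, and it is precisely the content of \cite[Cor.~5.26]{An6} that this paper chooses to cite rather than reprove.
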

\begin{Example}[Projective varieties]\label{projvarrem}
If $\GH_\bullet$ is commutative (recall that this means that $\GH_m\subseteq\GH^{\vee m}$ for all $m$) then $\Bi_\infty\cong C(M)$ is the $C^*$-algebra of continuous functions on a (complex nonsingular) projective variety $M\subseteq\C\Pb^{n-1}$, and $\GH_m$ is the Hilbert space of holomorphic sections of the $m$th power of a certain (``pre-quantum") line bundle $L$ over $M$,
\begin{equation}\label{Kahlerhilbs}
\GH_m=H^0(M,L^{\otimes m})
\end{equation}
The bimodule $\Ei_m$ mentioned in Remark \ref{bimodrem} is the space of continuous sections of the line bundle $L^{\otimes m}$. The function $\varsigma^{(m)}(A)\in C^\infty(M)$ (cf. Remark \ref{remarkcovsymb}) is the (unique) ``Berezin covariant symbol" of an operator $A$ on $\GH_m$, and if $A=\breve{\varsigma}^{(m)}(f)$ for some function $f\in C(M)$ (every operator on $\GH_m$ is of this form, as a special case of Theorem \ref{thmstrictquant}) then $f$ is a (non-unique) \textbf{contravariant symbol} of $\breve{\varsigma}^{(m)}(f)$ \cite{Bere3}, \cite{Schl1}, \cite{CGR1}. The fact that $(\Bi(\GH_m),\breve{\varsigma}^{(m)})_{m\in\N_0}$ is a strict quantization of $C^\infty(M)$ has been known for a long time \cite[Thms. 4.1, Thm. 4.2, Prop. 4.2, §5]{BMS}. 

We remark that coadjoint orbits of compact Lie groups are projective varieties. All compact Kähler manifolds which are ``quantizable" are projective varieties \cite[§2.1]{BeSl1}, and conversely every projective variety is a quantizable compact Kähler manifold. They are complex submanifolds of projective space. See \cite[§3]{An6} for a more detailed summary of the commutative case.
\end{Example}

\section{The dequantization manifold}
We have seen in Example \ref{projvarrem} that, for a projective variety $M$ and with $\GH$ defined by \eqref{Kahlerhilbs}, the algebras $\Bi(\GH_m)$ give an increasingly better approximation of $C^\infty(M)$ the larger the value of $m$. As a result, a quantum channel $\Phi$ with commuting Kraus operators determines a classical manifold. Similarly, a general channel $\Phi$ (i.e. with not necessarily commutating Kraus operators) comes with algebras $\Bi(\GH_m)$, and we have associated an algebra $\Bi_\infty$ in the same way as in the commutative case (recall the inductive limit defined in \eqref{indlimdef}). The results of the last section demands the following definition. 
\begin{dfn} We define
\begin{equation}\label{defofdequanteq}
C(\M):=\Bi_\infty
\end{equation}
to be the ``algebra of continuous functions" on the \textbf{dequantization manifold} $\M$ associated to the quantum channel $\Phi$. We also write $C^\infty(\M):={^{0}\Bi_\infty}$, where we recall that ${^{0}\Bi_\infty}$ is a dense $*$-subalgebra of $\Bi_\infty$. 
\end{dfn}
\begin{Notation}
To be perfectly clear: there is no honest manifold unless $\Phi$ defines a commutative subproduct system. From now on we use the symbol $\M$ for general $\Phi$ (including the commutative case) but clarify $M:=\M$ when we restrict to the commutative ones. 
\end{Notation}

\begin{Remark}[Why the inductive limit is classical]
The inductive limit \eqref{defofdequanteq} ignores all the finite-time fluctuation observables. Only the operators which have an effect that survives in the infinite-time limit $m\to\infty$ give an element of $C(\M)$. 
\end{Remark}
The relevance of the dequantization manifold $\M$ can be understood as follows. The algebraic relations among the Kraus operators are present already in the quantum system $\Hi_0$; all information is there. However, we need infinite time (infinitely many repetitions of the transformation $\Phi$) until all this information becomes relevant for every property of the system. At time $m=1$ the system already depends on the Kraus operators. However, the algebraic Kraus relations are irrelevant at this stage. As $m$ increases, the \textbf{trajectories}  
\begin{equation}\label{trajectories}
K_{j_m}^*\cdots K_{j_1}^*AK_{j_1}\cdots K_{j_m}
\end{equation}
of an observable $A\in\Bi(\Hi_0)$ depend more and more on the ``quantum symmetry" of the $K_j$'s (i.e. their algebraic relations). At infinite time this dependence is so strong that the ``manifold of trajectories" $\M$ is expected to cover most, or perhaps all, of the information of the system (i.e. of the infinite-time limit).

 Fix a quantum channel $\Phi$. Denote by $\GT$ the $C^*$-algebra generated by the Kraus operators $K_1,\dots,K_n$ of $\Phi$. There is a $\Z$-grading on $\GT$, which we write as
\begin{equation}\label{Krausalgedec}
\GT=\overline{\bigoplus_{k\in\Z}\GT^{(k)}}^{\|\cdot\|},
\end{equation}  
where $\GT^{(k)}$ is the norm-closed span of sums of products of the form $K_\mathbf{r} K_\mathbf{s}^*$ and $K_\mathbf{s}^*K_\mathbf{r}$ with $|\mathbf{r}|-|\mathbf{s}|=k$.
By \eqref{trajectories}, the physically relevant expressions $K_{j_m}^*\cdots K_{j_1}^*K_{k_1}\cdots K_{k_l}$ are the trajectories
\begin{equation}\label{idtrajectories}
K_{j_m}^*\cdots K_{j_1}^*K_{k_1}\cdots K_{k_m}
\end{equation}
(i.e. the ones with $m=l$; in the next section we will see why we need all the $K_j^*K_k$'s and not only the $K_j^*K_j$'s). The zeroth component $\GT^{(0)}$ is a $C^*$-algebra and contains the elements \eqref{idtrajectories} but also the elements
\begin{equation}\label{adjointtrajectories}
K_{r_1}\cdots K_{r_m}K_{s_m}^*\cdots K_{s_1}^*.
\end{equation}
\begin{Lemma}[Inductive limit versus algebra of Kraus operators]\label{normordrem}
Suppose that
\begin{enumerate}[(i)]
\item{the relations among the $K_j$'s are homogeneous, and}
\item{``normal ordering" is possible in $\GT^{(0)}$, i.e. there is some  way of transforming an expression of the form \eqref{adjointtrajectories} into the form \eqref{idtrajectories}.}
\end{enumerate}
Then 
\begin{equation}\label{trajectorymanequaldequant}
\GT^{(0)}\cong C(\M).
\end{equation}
\end{Lemma}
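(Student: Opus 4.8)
The plan is to construct a $*$-isomorphism $\mu_\infty\colon\Bi_\infty=\Ti_\GH^{(0)}/\Gamma_0\longrightarrow\GT^{(0)}$ directly from the minimal Stinespring dilations of the powers $\Phi^m$. Hypothesis (i) has two consequences used throughout: it makes \eqref{Krausalgedec} a genuine $\Z$-grading — equivalently, it yields a gauge action $\alpha_z(K_j)=zK_j$ of $\T$ on $\GT$ whose fixed-point algebra is $\GT^{(0)}$ — and it guarantees that the only linear relations among the length-$m$ products $\{K^*_{\mathbf{j}}:|\mathbf{j}|=m\}$ are those forced by the homogeneous relations among $K_1,\dots,K_n$. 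By minimality of the Stinespring representation of $\Phi^m=\sum_{|\mathbf{i}|=m}K^*_{\mathbf{i}}(\cdot)K_{\mathbf{i}}$, this last point identifies the Stinespring space $\GH_m$ with $\Ki_m:=\mathrm{span}\{K^*_{\mathbf{j}}:|\mathbf{j}|=m\}\subseteq\Bi(\Hi_0)$ as Hilbert spaces, compatibly as $m$ varies. Hypothesis (ii) says precisely that $\GT^{(0)}$ is the closed linear span of the ``identity trajectories'' $K^*_{\mathbf{j}}K_{\mathbf{k}}$, $|\mathbf{j}|=|\mathbf{k}|$.

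Let $V_m\colon\Hi_0\to\Hi_0\otimes\GH_m$ be the isometric minimal dilation, $\Phi^m(A)=V_m^*(A\otimes\bone)V_m$, and put $\mu_m(X):=V_m^*(\bone\otimes X)V_m$ for $X\in\Bi(\GH_m)$. In the basis of $\GH_m$ corresponding (under $\GH_m\cong\Ki_m$) to a minimal Kraus family, $\mu_m(X)$ is a linear combination of the trajectories $K^*_{\mathbf{j}}K_{\mathbf{k}}$; thus $\mu_m$ is a unital, $*$-preserving, contractive map into $\GT^{(0)}$, and $\bigcup_m\mathrm{ran}\,\mu_m$ is dense in $\GT^{(0)}$ by (ii). Since $\Phi^{m+1}=\Phi\circ\Phi^m$ and $\GH_{m+1}\subseteq\GH_m\otimes\GH$, the dilations compose, $V_{m+1}=(\bone\otimes P_{m+1})(V_m\otimes\bone)V_1$; inserting $V_1\xi=\sum_k(K_k\xi)\otimes e_k$ and using the unitality relation $\sum_kK^*_kK_k=\bone$ gives $\mu_{m+1}\circ\iota_{m,m+1}=\mu_m$, which at the level of words is the identity $K^*_{\mathbf{j}}K_{\mathbf{k}}=\sum_rK^*_{r\mathbf{j}}K_{r\mathbf{k}}$ — and this is exactly how $\iota_{m,m+1}$ acts under $\GH_m\cong\Ki_m$. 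Hence the family $(\mu_m)$ passes to the generalized inductive limit and defines a unital, $*$-preserving, contractive surjection $\mu_\infty\colon\Bi_\infty\to\GT^{(0)}$. Multiplicativity of $\mu_\infty$ comes from the asymptotic multiplicativity of the connecting maps $\iota_{\bullet,\bullet}$ (\cite[Thm.~4.8]{An6}): for representing sequences of elements of ${}^{0}\Bi_\infty$ one has $\|\mu_m(A_mB_m)-\mu_m(A_m)\mu_m(B_m)\|=\|V_m^*(\bone\otimes A_m)(\bone-V_mV_m^*)(\bone\otimes B_m)V_m\|\to0$.

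It remains to show $\mu_\infty$ is injective. I would derive this from Rieffel's condition, Theorem \ref{thmstrictquant}(i): for $f\in{}^{0}\Bi_\infty$ one has $\|f\|_{\Bi_\infty}=\lim_m\|\breve{\varsigma}^{(m)}(f)\|$. On the other side, under the identifications of the first paragraph the Toeplitz operator $\breve{\varsigma}^{(m)}(f)$ on $\GH_m$ becomes the compression to $\Ki_m$ of left multiplication by $\mu_\infty(f)$ on the $\Bi_\infty$-bimodule $\Ei_m$ of Remark \ref{bimodrem} (realized inside a spectral subspace of $\GT$), so that $\|\breve{\varsigma}^{(m)}(f)\|\le\|\mu_\infty(f)\|$ for every $m$; hence $\|f\|\le\|\mu_\infty(f)\|$, and with contractivity $\mu_\infty$ is isometric, proving \eqref{trajectorymanequaldequant}. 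The step I expect to be the real obstacle is precisely this last identification of the Toeplitz operator as a compression of $\mu_\infty(f)$: one must check that (i) and (ii) rigidify $\GT^{(0)}$ enough that the normal-ordering coefficients produced by (ii) are forced to coincide with those of the Fock/inductive-limit model — i.e. that (ii) introduces no freedom beyond what the homogeneous relations of (i) already fix, so that the left action of $\Bi_\infty$ on $\Ei_m$ really is implemented by $\mu_\infty$. Homogeneity enters here a second time, to see that any relation among the $K^*_{\mathbf{j}}K_{\mathbf{k}}$ is a consequence of relations among the generators and therefore already present in the subproduct system $\GH_\bullet$. (Alternatively, since both $\GT$ and the ambient graded algebra carry compatible $\T$-actions, one could package injectivity through a gauge-invariant uniqueness theorem, but the verification reduces to the same estimate.)
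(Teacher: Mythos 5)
Your construction is correct in outline but takes a genuinely different route from the paper. The paper argues by matching generators and relations: minimality gives linear independence of the $K_j$'s, hypothesis (i) forces the normally ordered words $K_\mathbf{j}^*K_\mathbf{k}$ to satisfy the same relations as the corresponding elements $S_\mathbf{j}S_\mathbf{k}^*$ of $\Ti_\GH^{(0)}/\Gamma_0$, hypothesis (ii) gives density of $\bigcup_m\Bi(\GH_m)$ in $\GT^{(0)}$, and the norms are then asserted to agree via \cite[§4.1]{An6}. You instead build the isomorphism explicitly as the inductive limit of the maps $\mu_m(X)=V_m^*(\bone\otimes X)V_m$ — which are exactly the covariant-symbol/coarse-graining maps $\varsigma^{(m)}$ of \eqref{covsymbKraus} — and your compatibility identity $K_\mathbf{j}^*K_\mathbf{k}=\sum_r(K_rK_\mathbf{j})^*(K_rK_\mathbf{k})$, coming from unitality \eqref{unitalityKraus}, is precisely the reason these symbols are consistent along $\iota_{\bullet,\bullet}$; this makes the role of (ii) (surjectivity) and of (i) (identification $\GH_m\cong\mathrm{span}\{K_\mathbf{j}^*\}$) more transparent than in the paper, and it connects the Lemma directly to Theorem \ref{dequantthm} and Corollary \ref{ifhomogencor}. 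Two caveats. First, asymptotic multiplicativity of $\mu_\infty$ is not literally the asymptotic multiplicativity of the connecting maps $\iota_{m,l}$: what you need is that $\|V_m^*(\bone\otimes A_m)(\bone-V_mV_m^*)(\bone\otimes B_m)V_m\|\to 0$ for eventually constant sequences, which is the covariant-symbol counterpart of von Neumann's condition in Theorem \ref{thmstrictquant}(ii) and has to be imported from \cite{An6} (Thm.~5.16) rather than from Thm.~4.8. Second, the injectivity step — identifying $\breve{\varsigma}^{(m)}(f)$ with a compression of left multiplication by $\mu_\infty(f)$ so that Rieffel's condition yields isometry — is exactly where the paper's own proof is also thinnest (its one-sentence claim that the norm on $\Bi_\infty$ ``is seen to correspond'' to the operator norm on $\Bi(\Hi_0)$); you at least isolate this honestly as the crux, but neither argument closes it without the machinery of \cite{An6}. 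So: same theorem, dual construction (explicit symbol maps versus generators-and-relations), with the same residual reliance on the companion paper at the isometry step.
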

\begin{proof} The assumption that $n$ is the minimal number of Kraus operators needed to represent $\Phi$ as in \eqref{Krausrep} implies that $K_1,\dots,K_n$ are linearly independent. As observed, the products $K_\mathbf{j}^*$ with $|\mathbf{j}|=m$ span a Hilbert space $\GH_m$. We can identify $K_\mathbf{j}^*K_\mathbf{k}$ for $|\mathbf{j}|=m=|\mathbf{k}|$ with an overcomplete basis for $\Bi(\GH_m)$, and the same is true for $Z_\mathbf{j}Z_\mathbf{k}^*$. Indeed, assumption (i) implies that the generators $Z_1,\dots,Z_n$ of $\Oi_\GH$ satisfy the same non-$*$ relations as the generators $K_1,\dots,K_n$ of $\GT$. Since normal ordering is assumed to be possible in $\GT^{(0)}$, the $\Bi(\GH_m)$'s form a dense $*$-subalgebra of $\GT^{(0)}$. It follows that the $K_\mathbf{j}^*K_\mathbf{k}$'s satisfy the same relations as the $Z_\mathbf{j}Z_\mathbf{k}^*$'s and generate isomorphic $*$-algebras. Finally, the norm on $\Bi_\infty$ (see \cite[§4.1]{An6}) is seen to correspond to the operator norm on $\Bi(\Hi_0)$ under the identification of these $*$-algebras. So we have an isomorphism $\GT^{(0)}\cong\Bi_\infty\equiv C(\M)$. 


\end{proof}
Intuitively, if a normal-ordering prescription is available in $\GT^{(0)}$ then it is encoded in $\GH_\bullet$ because (in that case) the anti-normally ordered elements $K_\mathbf{j}K_\mathbf{k}^*$ belong to $\Bi(\GH_m)$ as well. In general, the inductive limit differs from $\GT^{(0)}$. The issue is that $C(\M)$ contains only the information of the \emph{forward} time evolution $(\Phi^m)_{m\in\N_0}$. A suitable notion of time reversal for $\Phi$ is a channel $\tilde{\Phi}$ whose trajectories are combinations of those in \eqref{adjointtrajectories} (see \cite{An4}). The anti-normally ordered trajectories \eqref{adjointtrajectories}, which give the \emph{backward} evolution, may not retrodictable from the forward ones, and that is when \eqref{trajectorymanequaldequant} fails. Except for extreme irregularity in the time evolution however, we expect that normal ordering is possible in $\GT^{(0)}$. 

In contrast to Example \ref{projvarrem}, the infinite-time limit of the algebras $\Bi(\GH_m)$ is not commutative in general. However, we shall see in Example \ref{Examplefreegroup} that, even in the opposite extreme of free commutation relations, there are still reasons to regard $C^\infty(\M)$ as a classical limit of the bath algebras.

\section{The complementary channel}
We consider again a fixed quantum channel $\Phi:\Bi(\Hi_0)\to\Bi(\Hi_0)$ with a finite number of Kraus operators $K_1,\dots,K_n$ and let $\Phi_*:\Bi(\Hi_0)_*\to\Bi(\Hi_0)_*$ be the corresponding evolution of density matrices, so
$$
\Phi_*(\rho)=\Tr_{\GH}\big(W(\rho\otimes|e_1\ket\bra e_1|)W^{-1}\big),\qquad \forall \rho\in\Bi(\Hi_0)_*
$$
where $\GH\cong\C^n$ and $W$ are as in \eqref{unitarymodel}. Recall that for finite-dimensional $\Hi_0$, the \textbf{complementary channel} of $\Phi_*$ is the quantum channel $\Phi^\natural_*:\Bi(\Hi_0)_*\to\Bi(\GH)_*$ defined by \cite{DFH1}
\begin{equation}\label{complemchannel}
\Phi^\natural_*(\rho):=\Tr_{\Hi_0}\big(W(\rho\otimes|e_1\ket\bra e_1|)W^{-1}\big),\qquad \forall \rho\in\Bi(\Hi_0)_*,
\end{equation}
which yields the expression
$$
\Phi^\natural_*(\rho)=\sum^n_{j,k=1}\Tr(\rho K_k^*K_j )S_jS_k^*\big|_{\GH}.
$$
In \cite[§1]{KMNR1}, the state $\Phi^\natural_*(\rho)$ on $\GH$ is interpreted as the information available about the state $\rho$ on $\Hi_0$ at time $t=\tau$ (if $\Phi$ evolves the system from $t=0$ to $t=\tau$). 

We now want to find a suitable analogue of \eqref{complemchannel} in the Heisenberg picture for which this ``information-at-time-$\tau$" picture remains valid. This can be done by interpreting the collection $\{K_j^*K_k\}_{j,k=1}^n$ as describing a measurement. Let $\rho_0$ be a fixed density matrix on $\Hi_0$ (the ``true state" of the system), assumed to satisfy $K_k\rho_0K_k^*\ne 0$ for all $k=1,\dots,n$. Here we allow $\Hi_0$ to be infinite-dimensional again. For a projective measurement, i.e. one with $K_jK_k=\delta_{j,k}K_k$ and $K_j^*=K_j$, the information known about $A\in\Bi(\Hi_0)$ at time $t=\tau\equiv 1$ is given by the expectation values
$$
\Psi^{(1)}(A)_{k,k}:=\frac{\Tr(\rho_0 K_k^*K_kA)}{\Tr(\rho_0 K_k^*K_k)},\qquad k=1,\dots,n
$$
of $A$ in the ``post-measurement states"\footnote{Usually the post-measurement states are taken to be proportional to $K_k\rho_0 K_k^*$, but it will be more convenient to \\use $\rho_0K_k^*K_k$ so that we have perfect agreement with the conventions in \cite{An6}.} $\rho_j:=\rho_0K_k^*K_k/\Tr(K_k\rho_0K_k^*)$. For general $K_j$'s, we need also 
$$
\Psi^{(1)}(A)_{j,k}:=\frac{\Tr(\rho_0 K_k^*K_jA)}{\Tr(\rho_0K_k^* K_k)},\qquad j\ne k,
$$
because there is a probability of misinterpreting outcome $k\in\{1,\dots,n\}$ as being $j\in\{1,\dots,n\}$. The same considerations apply for all times $t=m\in\N$. The ``dequantization" of an operator $A$ on $\Hi_0$ should therefore be determined by the post-measurement expectations $\Psi^{(1)}(A)_{j,k}$ and their analogues for larger $m$. For sufficiently many repetitions $m$, these expectation values should then approximate the operator $A$ in a $\Phi$-dependent fashion. 

We want to apply results from \cite{An6}, so we need to identify our variables with those used there.
\begin{Notation} For each $m\in\N$, we let $Q_m/\Tr(Q_m)\in\Bi(\GH_m)$ be the \textbf{correlation matrix} of the state $\rho_0$ on $\Hi_0$, i.e. we define the matrix $Q_m$ with entries $Q_{\mathbf{j},\mathbf{k}}:=\bra e_\mathbf{j}|Q_me_\mathbf{k}\ket$ by 
\begin{equation}\label{corrmatrix}
\frac{Q_{\mathbf{j},\mathbf{k}}}{\Tr(Q_m)}:=\Tr(\rho_0K_\mathbf{k}^*K_\mathbf{j}),
\end{equation}
where we fix $\Tr(Q_m)$ e.g. by requiring that $\Tr(Q_m^{-1})=\Tr(Q_m)$. 
\end{Notation}
It is \eqref{unitalityKraus} which ensures that the correlation matrix is a density matrix. Note that $Q_m$ is invertible even if $\rho_0$ is not faithful, because of the assumption that $K_k\rho_0K_k^*\ne 0$ for all $k=1,\dots,n$. 

In order to apply the whole machinery of \cite{An6} we need $\rho_0$ to have $\Phi$-\textbf{symmetric correlations}, in the sense that if $Q:=Q_1$ then we have the two equalities
$$
Q_m=p_mQ^{\otimes m}p_m=Q^{\otimes m}p_m
$$
for all $m\in\N$, where $p_m:\GH^{\otimes m}\to\GH_m$ denotes the projection. Such a $\rho_0$ may be regarded as the asymptotic equilibrium state of the system (see \cite{An4}). With this assumption on $\rho_0$, we can change the inner product on each $\GH_m$ so that it depends on $Q_m$ without spoiling the subproduct condition \eqref{inclusionm} (see \cite[§4.5]{An6}).

We are thus led to the following definition.
\begin{dfn} Suppose that $\rho_0$ is a density matrix on $\Bi(\Hi_0)$ with $\Phi$-symmetric correlations. For $m\in\N$, the \textbf{time-$m$ dequantization} of $A\in\Bi(\Hi_0)$ is the operator on $\GH_m$ given by
\begin{equation}
\Psi^{(m)}(A)=\Tr(Q_m)\sum_{|\mathbf{j}|=m=|\mathbf{k}|}(Q^{\otimes m})^{-1}_{\mathbf{k},\mathbf{k}}\Tr(\rho_0K_\mathbf{k}^*K_\mathbf{j}A) S_\mathbf{j}S_\mathbf{k}^*\big|_{\GH_m}\label{dequantascontr}.
\end{equation}
\end{dfn}
For $m=1$ we indeed get
$$
\Psi^{(1)}(A)=\sum_{j,k=1}^n\frac{\Tr(\rho_0K_k^*K_jA)}{\Tr(\rho_0K_k^*K_k)} S_jS_k^*\big|_{\GH}
$$
which is what we had in our discussion about post-measurement expectation values in the beginning of the section. For general $m\geq 1$ however, the normalization in $\Psi^{(m)}(A)$ is given by products $\Tr(\rho_0K_{k_1}^*K_{k_1})\cdots \Tr(\rho_0K_{k_m}^*K_{k_m})$ of probabilities for the time $m=1$ measurement outcomes. We have no intuitive argument for why that normalization would be the correct thing, except that it ensures that $\Psi^{(m)}(\bone)=p_m$ ($=$ the unit in $\Bi(\GH_m)$) while the naive choice $\Tr(\rho_0K_\mathbf{k}^*K_\mathbf{k})$ would not. 
 The reason why we believe \eqref{dequantascontr} is the correct thing is the following. 
\begin{thm}\label{dequantthm}
Let $\Phi:\Bi(\Hi_0)\to\Bi(\Hi_0)$ be a quantum channel and let $\GH_\bullet$ be its subproduct system. Suppose that $\rho_0$ is a density matrix with $\Phi$-symmetric correlations such that $\Tr(\rho_0\cdot)$ restricts to a faithful state on the $C^*$-algebra $\GT^{(0)}$ generated by $K_j^*K_k$ and $K_kK_j^*$ ($j,k=1,\dots,n$). Let $\Bi^\infty$ be the space of operators on Fock space $\GH_\N$ spanned by the elements
$$
\Psi(A):=(\Psi^{(m)}(A))_{m\in\N_0},\qquad A\in\Bi(\Hi_0),
$$  
where $\Psi^{(m)}(A)$ is defined in \eqref{dequantascontr}. Then $\Bi^\infty$ is a $C^*$-algebra under the multiplication
\begin{equation}\label{projlimmult}
\Psi(A)\cdot\Psi(B):=\lim_{m\to\infty}\Psi^{(m)}(AB),\qquad \forall A,B\in\Bi(\Hi_0);
\end{equation}
in fact
$$
\Bi^\infty\cong C(\M),
$$
where $\M$ is the dequantization manifold of $\Phi$. 
\end{thm}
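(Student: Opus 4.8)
The plan is to reduce the statement to the strict-quantization Theorem~\ref{thmstrictquant} by first recognizing, in the language of \cite{An6}, that the time-$m$ dequantization $\Psi^{(m)}$ is a level-$m$ Berezin--Toeplitz quantization. Concretely, the initial (and, I expect, hardest) step is the ``identification of variables'' promised in the text before \eqref{dequantascontr}: one must read off from \eqref{dequantascontr} a symbol $\beta_m(A)\in{^{0}\Bi_\infty}=C^\infty(\M)$ — built from the numbers $\Tr(\rho_0K_{\mathbf k}^*K_{\mathbf j}A)$ together with the weights $\Tr(Q_m)(Q^{\otimes m})^{-1}_{\mathbf k,\mathbf k}$ — for which
\[
\Psi^{(m)}(A)=\breve{\varsigma}^{(m)}\!\big(\beta_m(A)\big),\qquad A\in\Bi(\Hi_0),
\]
with $\breve{\varsigma}^{(m)}$ the level-$m$ Toeplitz operator of Remark~\ref{bimodrem} formed relative to the $Q_m$-twisted inner products on $\GH_m$. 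Making this precise requires unwinding \eqref{dequantascontr} against the definitions of $\breve{\varsigma}^{(m)}$, $\varsigma^{(m)}$ and the weighted inner products of \cite{An6}, and checking that every combinatorial factor matches — in particular that normalizing by the product $\Tr(\rho_0K_{k_1}^*K_{k_1})\cdots\Tr(\rho_0K_{k_m}^*K_{k_m})$ of time-one probabilities, pinned down by the requirement $\Psi^{(m)}(\bone)=p_m$, is exactly the contravariant-symbol normalization of \cite[Cor. 5.26]{An6}. It is here that $\Phi$-symmetric correlations are used, since that hypothesis is what lets the $Q_m$-twisted inner products be defined compatibly with the subproduct relations \eqref{inclusionm}.

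With this identification in place, most of the structure is read off from Theorem~\ref{thmstrictquant}. Rieffel's condition~(i) shows that $\lim_{m\to\infty}\|\Psi^{(m)}(A)\|$ exists and is a genuine norm, the eventual $C^*$-norm of $\Bi^\infty$. The existence of the limit \eqref{projlimmult}, and the fact that it defines an associative, $*$-compatible product, rests on showing $\|\Psi^{(m)}(AB)-\Psi^{(m)}(A)\Psi^{(m)}(B)\|\to 0$, i.e. that the block-wise operator product and the sequence $(\Psi^{(m)}(AB))_m$ agree modulo sequences negligible in the generalized-inductive-limit sense of \cite{BlKi1}, \cite[§B]{Hawk1}; this combines von Neumann's condition~(ii) with the asymptotic multiplicativity of the symbol map $\beta_m$, which one extracts from the channel identity $\sum_k K_k^*K_k=\bone$ — resolutions of the identity may be inserted between $A$ and $B$ at the cost of errors controlled, as $m\to\infty$, by \eqref{inclusionm} and by faithfulness. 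Faithfulness of $\Tr(\rho_0\cdot)$ on $\GT^{(0)}$ enters again to guarantee that the limit depends only on $\Psi(A)$ and $\Psi(B)$, so that the product descends to $\Bi^\infty$; the $C^*$-identity then follows from asymptotic multiplicativity together with $\beta_m(A^*)=\beta_m(A)^*$, and completeness from completeness of $\Bi_\infty$.

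To finish, one identifies $\Bi^\infty$ with $C(\M)=\Bi_\infty$. The sequence $(\Psi^{(m)}(A))_m=(\breve{\varsigma}^{(m)}(\beta_m(A)))_m$ is asymptotically $\iota_{\bullet,\bullet}$-compatible, hence represents a well-defined element $[(\Psi^{(m)}(A))_m]\in\Bi_\infty$, which may also be written $\lim_{m\to\infty}\varsigma^{(m)}(\Psi^{(m)}(A))$ using that the Berezin transform $\varsigma^{(m)}\circ\breve{\varsigma}^{(m)}$ tends to the identity on ${^{0}\Bi_\infty}$ (a consequence of the strict-quantization conditions). By the previous paragraph the resulting map $\Psi(A)\mapsto[(\Psi^{(m)}(A))_m]$ is a $*$-homomorphism for the multiplication \eqref{projlimmult} and is isometric by Rieffel's condition (hence injective); it is surjective onto the dense subalgebra ${^{0}\Bi_\infty}$ because every operator on $\GH_m$ is of the form $\breve{\varsigma}^{(m)}(f)$ (last clause of Theorem~\ref{thmstrictquant}) and faithfulness of $\Tr(\rho_0\cdot)$ lets each such $f$ be realized as a $\beta_m(A)$. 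An isometric $*$-isomorphism onto a dense $*$-subalgebra of the complete $C^*$-algebra $\Bi_\infty$ extends to all of it, giving $\Bi^\infty\cong\Bi_\infty=C(\M)$.

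The main obstacle is the pair of facts underpinning everything: the exact identity $\Psi^{(m)}=\breve{\varsigma}^{(m)}\circ\beta_m$ with the correct $Q_m$-weights, and the asymptotic multiplicativity of $\beta_m$. Both are ``channel-theoretic'' inputs rather than consequences of the general quantization machinery of \cite{An6}, and both lean on the single hypothesis that $\Tr(\rho_0\cdot)$ is faithful on $\GT^{(0)}$: without it the symbol map would collapse distinct trajectories, the limit \eqref{projlimmult} need not be well defined or associative, and the identification with $\Bi_\infty$ would fail — indeed in general $\GT^{(0)}\ne C(\M)$ (see the discussion after Lemma~\ref{normordrem}), so none of this can be read off a naive inclusion $\GT^{(0)}\hookrightarrow\Bi_\infty$. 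The interchange of the two $m\to\infty$ limits (the one defining the product on $\Bi^\infty$ and the one defining the isomorphism) is a comparatively routine point, dispatched by the uniform-in-$m$ estimates of Theorem~\ref{thmstrictquant}.
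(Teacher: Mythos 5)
Your route is genuinely different from the paper's. The paper's proof is a two-line reduction: it observes that each $\Psi^{(m)}$ is surjective onto $\Bi(\GH_m)$ and that \eqref{dequantascontr} matches the \emph{projective} system of \cite{An6} (with $Z_k\leftrightarrow K_k^*$), and then cites \cite[Thm.\ 5.16]{An6} wholesale. You instead try to rebuild the result inside this paper from the \emph{inductive}-limit picture and the strict-quantization Theorem~\ref{thmstrictquant} (= \cite[Cor.\ 5.26]{An6}), by writing $\Psi^{(m)}=\breve{\varsigma}^{(m)}\circ\beta_m$. That identification is sound as far as it goes — \eqref{dequantascontr} is visibly a combination of the compressed shifts $S_\mathbf{j}S_\mathbf{k}^*|_{\GH_m}$, i.e.\ a Toeplitz operator whose contravariant symbol is read off from the coefficients (this is exactly what the label of \eqref{dequantascontr} advertises) — and your observations about where $\Phi$-symmetric correlations and faithfulness enter are consistent with the paper's hypotheses.

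However, there is a genuine gap at the load-bearing step. Theorem~\ref{thmstrictquant} is a statement about a \emph{fixed} symbol $f\in{^{0}\Bi_\infty}$ quantized at every level $m$; your symbol $\beta_m(A)$ depends on $m$, and you never establish that the sequence $\beta_m(A)$ converges (or is eventually $\iota_{\bullet,\bullet}$-compatible) in ${^{0}\Bi_\infty}$, which is what you would need before conditions (i)--(iii) say anything about $(\Psi^{(m)}(A))_m$. Relatedly, the ``asymptotic multiplicativity of $\beta_m$,'' which you propose to ``extract from $\sum_k K_k^*K_k=\bone$'' by inserting resolutions of the identity, is precisely the nontrivial content of the theorem: inserting $\sum_{|\mathbf{l}|=m}K_\mathbf{l}^*K_\mathbf{l}$ between $A$ and $B$ in $\Tr(\rho_0K_\mathbf{k}^*K_\mathbf{j}AB)$ does not factor the coefficient into a product of symbol coefficients, and no estimate is offered. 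A second, smaller gap: well-definedness of \eqref{projlimmult} requires that $\Psi^{(m)}(C)=0$ for all $m$ force $\lim_m\Psi^{(m)}(CB)=0$; since the family $\{S_\mathbf{j}S_\mathbf{k}^*|_{\GH_m}\}$ is overcomplete, vanishing of the operator does not give vanishing of the coefficients $\Tr(\rho_0K_\mathbf{k}^*K_\mathbf{j}C)$, so ``faithfulness guarantees it'' needs an actual argument. These are exactly the points the paper outsources to \cite[Thm.\ 5.16]{An6}, and your replacement scaffolding does not yet supply them.
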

\begin{proof} Each $\Psi^{(m)}(A)$ is surjective. The definition of $\Psi^{(m)}$ ensures compatibility with the projective system discussed in \cite{An6} (where $Z_k$ corresponds to $K_k^*$) and so the result follows from \cite[Thm. 5.16]{An6}.
\end{proof}
 
\begin{cor}\label{ifhomogencor} Suppose that $K_1,\dots,K_n$ satisfy only homogeneous relations and allow normal ordering, so that we can identify $C(\M)=\GT^{(0)}\subset\Bi(\Hi_0)$ (see Lemma \ref{normordrem}). Suppose also that the restriction of $\Tr(\rho_0\cdot)$ to $\GT^{(0)}$ is faithful. Then the map $\Psi:\Bi(\Hi_0)\to\Bi^\infty$ restricts to an isomorphism $\breve{\varsigma}:C(\M)\to\Bi^\infty$. 
\end{cor}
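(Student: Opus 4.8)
The plan is to combine Lemma~\ref{normordrem} with Theorem~\ref{dequantthm} and then show that restricting the domain of $\Psi$ to $C(\M)$ turns the surjection onto $\Bi^\infty$ into a bijection. First I would invoke Lemma~\ref{normordrem}: under the present hypotheses it identifies $C(\M)$ with the $C^*$-subalgebra $\GT^{(0)}\subset\Bi(\Hi_0)$, so $\GT^{(0)}$ genuinely lies in the domain of the dequantization map $\Psi$, and $C^\infty(\M)={^{0}\Bi_\infty}$ corresponds to the span of the normally ordered words $K_\mathbf{j}^*K_\mathbf{k}$ with $|\mathbf{j}|=|\mathbf{k}|$. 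Theorem~\ref{dequantthm} already supplies a surjective $*$-homomorphism $\Psi:\Bi(\Hi_0)\to\Bi^\infty$, where $\Bi^\infty$ carries the product \eqref{projlimmult} (multiplicativity of $\Psi$ is built into that product and its well-definedness is the von Neumann condition of Theorem~\ref{thmstrictquant}), together with a $*$-isomorphism $\Bi^\infty\cong C(\M)$. Writing $\kappa:\Bi^\infty\xrightarrow{\ \sim\ }C(\M)=\GT^{(0)}$ for the composite of these two identifications, it then suffices to prove
\[
\kappa\circ\bigl(\Psi|_{\GT^{(0)}}\bigr)=\id_{\GT^{(0)}},
\]
for then $\Psi|_{\GT^{(0)}}=\kappa^{-1}$ is a $*$-isomorphism onto $\Bi^\infty$, and one sets $\breve\varsigma:=\Psi|_{C(\M)}$.

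Both sides of the displayed identity are $*$-homomorphisms of $\GT^{(0)}$, so it is enough to compare them on the generators $K_j^*K_k$ and $K_kK_j^*$ of $\GT^{(0)}$. For these I would unwind the definition \eqref{dequantascontr}: using the $Q_m$-weighted inner products on $\GH_m$ afforded by the $\Phi$-symmetric correlations of $\rho_0$ (the correlation matrices \eqref{corrmatrix} and \cite[§4.5]{An6}), one checks that $\Psi^{(m)}(K_\mathbf{a}^*K_\mathbf{b})$ is exactly the level-$m$ Toeplitz operator $\breve\varsigma^{(m)}(f)$ whose symbol $f\in C^\infty(\M)$ is the element that $\kappa^{-1}$ sends $K_\mathbf{a}^*K_\mathbf{b}$ to; the anti-normally ordered generators $K_kK_j^*$ are handled the same way after rewriting them inside $\GT^{(0)}$ using the assumed normal ordering. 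Taking covariant symbols (Remark~\ref{remarkcovsymb}) and passing to $m\to\infty$, the projective-system identification of \cite[Thm.~5.16]{An6} — with $Z_k$ in the role of $K_k^*$ and $\GT^{(0)}\cong\Bi_\infty$ as in Lemma~\ref{normordrem} — returns $\Psi(K_\mathbf{a}^*K_\mathbf{b})$ to $K_\mathbf{a}^*K_\mathbf{b}$, which is the desired identity on generators. The faithfulness of $\Tr(\rho_0\,\cdot\,)$ on $\GT^{(0)}$ enters precisely as the statement that the functionals $\Tr(\rho_0K_\mathbf{k}^*K_\mathbf{j}\,\cdot\,)$ appearing in \eqref{dequantascontr} separate the points of $\GT^{(0)}$, which is what makes this Berezin-type correspondence nondegenerate and $\kappa$ well defined with the stated inverse.

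The main obstacle is this middle step: reconciling the $m$-dependent normalizations and weighted inner products of \eqref{dequantascontr} with the Toeplitz quantization $\breve\varsigma^{(m)}$ that underlies \cite[Thm.~5.16]{An6}. This is where both standing hypotheses are really used — homogeneity together with the availability of normal ordering, so that $C(\M)=\GT^{(0)}$ sits inside $\Bi(\Hi_0)$ at all (Lemma~\ref{normordrem}), and faithfulness of $\Tr(\rho_0\,\cdot\,)$ on $\GT^{(0)}$, so that the pairing entering the dequantization is nondegenerate. Once $\kappa\circ\Psi|_{\GT^{(0)}}=\id_{\GT^{(0)}}$ has been established, injectivity of $\Psi|_{\GT^{(0)}}$ is immediate, and surjectivity follows because the image of a $*$-homomorphism between $C^*$-algebras is a $C^*$-subalgebra of $\Bi^\infty$ on which $\kappa$ is already the inverse bijection; hence $\breve\varsigma=\Psi|_{C(\M)}:C(\M)\to\Bi^\infty$ is the required isomorphism.
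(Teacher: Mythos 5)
Your argument is correct and is essentially the route the paper intends: the paper gives no written proof of this corollary, treating it as immediate from Lemma~\ref{normordrem} (which places $C(\M)=\GT^{(0)}$ inside $\Bi(\Hi_0)$) together with Theorem~\ref{dequantthm}, and your elaboration --- verifying $\kappa\circ\Psi|_{\GT^{(0)}}=\id$ on the generators $K_j^*K_k$ and $K_kK_j^*$ and then reading off injectivity and surjectivity from $\kappa$ being a bijection --- is the natural filling-in of that step. Your identification of $\Psi^{(m)}$ on $\GT^{(0)}$ with the level-$m$ Toeplitz maps $\breve{\varsigma}^{(m)}$ is consistent with the paper's later formula $\varsigma^{(m)}(X)=\sum_{|\mathbf{j}|=m=|\mathbf{k}|}X_{\mathbf{j},\mathbf{k}}K_\mathbf{j}^*K_\mathbf{k}$, so there is nothing to add.
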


\begin{Remark}[Multiplication] The multiplication in $\Bi^\infty$ coincides with that in $\Bi(\Hi_0)$ only for elements in $C(\M)$. For general $A\in\Bi(\Hi_0)$, the operator $\Psi(A)$ acts in a $\Phi$-dependent way.
\end{Remark}
The correlation matrix $Q/\Tr(Q)$ is a quantum analogue of the probability measure on $\{1,\dots,n\}$ which determines a classical measurement. The idea that a quantum measurement $\{K_k^*K_j\}_{j,k=1}^n$ gives a coarse-grained description of a possibly infinite-dimensional system $\Bi(\Hi_0)$ as an $n$-dimensional matrix algebra $\Bi(\GH)$ is not new; see \cite[§10.1]{AlFa1}. Here we have observed that we can increase the accuracy of such a description using possibly infinitely many repetitions of the same measurement.

The adjoint $\varsigma^{(m)}:\Bi(\GH_m)\to C(\M)$ of $\breve{\varsigma}^{(m)}:C(\M)\to\Bi(\GH_m)$ is a generalization of the covariant symbol map of classical Berezin quantization (cf. Remark \ref{projvarrem}). In fact, $\varsigma^{(m)}$ has appeared in the literature already, as the ``coarse-graining map" of the measurement \cite[§10.1]{AlFa1}; it takes the form
\begin{equation}\label{covsymbKraus}
\varsigma^{(m)}(X)=\sum_{|\mathbf{j}|=m=|\mathbf{k}|}X_{\mathbf{j},\mathbf{k}}K_\mathbf{j}^*K_\mathbf{k}
\end{equation}
for $X\in\Bi(\GH_m)$, where $X_{\mathbf{j},\mathbf{k}}\in\C$ are the matrix entries of $X$. Finally, a straightforward calculation shows that if $V_m:\Hi_0\to\Hi_0\otimes\GH$ is the Stinespring isometry of $\Phi^m$ (see e.g. \cite{Ar3}) then
$$
\varsigma^{(m)}(X)=V_m^*(\bone\otimes X)V_m,
$$
so that $\varsigma^{(m)}$ in fact coincides with what has also been referred to as the Heisenberg-picture complementary channel of $\Phi^m$ in some works \cite[§IV]{KSW1}. 
\begin{Remark}[Limit state] If $\rho_0$ has $\Phi$-symmetric correlations then the states $\Tr(Q_m\cdot)/\Tr(Q_m)$ on the $\Bi(\GH_m)$'s converge to the state on $C(\M)\subset\Bi(\Hi_0)$ which is the restriction of $\Tr(\rho_0\cdot)$. This result follows from \cite[Remark 2.7]{An4}, \cite[Prop. 5.9]{An6} and makes precise in what way the correlations matrices approximate the state $\rho_0$. 
\end{Remark}

\section{Cutting off at finite time}
At finite times $m$, the observable algebra $\Bi(\GH_m)$ of the bath $\GH_m$ is still noncommutative even for commuting Kraus operators. Only in the infinite-time approximation do we get the algebra $C^\infty(\M)$. 

An exception is when $\Bi(\GH_{m+1})\cong\Bi(\GH_m)$ for all $m$ larger than some integer $l$. Then we simply have $C^\infty(\M)=\Bi(\GH_l)$. In this case the classical limit is reached at finite time $l$.
\begin{Example}[Projective von-Neumann measurements]\label{onetimefuzzyex}
Let $P_1,\dots,P_n$ be pairwise orthogonal projections in $\Bi(\Hi_0)$ with $\sum_kP_k=\bone$ and set
$$
\Phi(A):=\sum^n_{k=1}P_kAP_k,\qquad\forall A\in\Bi(\Hi_0).
$$
The subproduct system $\GH^{\bullet}=(\GH_m)_{m\in\N_0}$ defined by $\Phi$ has $\GH_2\subset\GH^{\otimes 2}$ equal to  the subspace defined by the relation $P_jP_k=\delta_{j,k}P_k$ for all $j,k=1,\dots,n$. That is, 
$$
\GH_2=\text{span}\{e_j\otimes e_j\}_{j=1}^n\cong\text{span}\{e_j\}_{j=1}^n=\GH,
$$
and more generally $\GH_m\cong\GH$ for all $m$. By commutativity we know that the dequantization manifold $\M=M$ is a projective variety (see Example \ref{projvarrem}). It is defined by the homogeneous ideal in $\C[z_1,\dots,z_n]$ generated by the polynomials
$$
z_jz_k,\qquad j\ne k=1,\dots,n.
$$
The points $z=(z_1,\dots,z_n)$ in $\C^n$ that satisfy $z_jz_k=0$ for all $j\ne k$ are of the form $(\lambda,0,\dots,0)$, or $(0,\lambda,0,\dots,0)$, etc., up to $(0,\dots,0,\lambda)$ for some $\lambda\in\C$. These points form $n$ lines in $\C^n$ which give $n$ distinct points in in $\C\Pb^{n-1}$. The dequantization manifold $\M=M$ is therefore an $n$-point space $\{1,\dots,n\}$. The algebra $C(M)$ is in this peculiar case isomorphic to the algebra $\GT$ generated by the Kraus operators $P_1,\dots,P_n$ (see \eqref{Krausalgedec}). It is acted upon ergodically by the permutation group $\Pb_n$ of $n$ letters. In this example, cutting off at finite time doesn't change anything; we obtain classicality already after one time step $m=1$. 
\end{Example}

\begin{Example}[Noncommuting projective measurements] The simplest possible example of noncommuting Kraus operators is provided by a channel $\Phi:\Bi(\Hi_0)\to\Bi(\Hi_0)$ which the sum of two noncommuting projections $P$ and $Q$ on $\Hi_0$,
$$
\Phi(A):=\frac{1}{2}(PAP+QAQ).
$$
The map $\Phi$ corresponds to two sequential projective von Neumann measurements with only two possible outcomes each. Even though $P^2=P$ and $Q^2=Q$, it may happen if $\Hi_0$ is infinite-dimensional that we do not get $\GH_m\cong\GH_l$ for any $l\ne m$. That is, the possible trajectories $P, Q, PQP, (\bone-P)QP,\dots$ of any number $m\in\N$ of measurements  are infinitely many and hence the $\Phi$-Fock space is infinite-dimensional and contains a lot of information. Such maps were proposed as efficients tools in quantum state tomography in \cite{NG}
\end{Example}

In most applications to open quantum system, e.g. in time-resolved spectroscopy, we would rather want a gradual time development of the system response that can be monitored over time, not a collapse immediately from the initial state with no data points for the time in between. In particular, if we want to have any information about coherence (the off-diagonal matrix elements of the density matrix in the chosen basis) then we should not wipe out the off-diagonals of the density matrix in one step. 

Therefore, the projections $P_k$ of Example \ref{onetimefuzzyex} need to be replaced by general operators $K_k$ with $\sum_kK_k^*K_k=\bone$. In the limit of infinite time we again get classicality if $[K_j,K_k]=0$ for all $j,k$ but in the more realistic finite-time situation, the cutoff at time $m\in\N$ leads to a fuzzy space if the $K_k$'s are sufficiently far from being projections. For noncommuting $K_k$'s and finite time we get a ``fuzzy noncommutative manifold" of trajectories.

\section{Examples of dequantization manifolds}\label{Exampmansection}
A discussion about the physical meaning of different Kraus-operator commutation relations can be found in \cite{An4}. Here we simply record what dequantization manifolds they give rise to. 
\begin{Example}[$2$-sphere]\label{sphereex}
For $\Phi:\Bi(\Hi_0)\to\Bi(\Hi_0)$ with a minimal Kraus decomposition consisting of two commuting Kraus operators satisfying no other relations, $\GH_\bullet=\GH^{\vee\bullet}$ is the subproduct system in Example \ref{fullcommute} with $n=2$, for which $\GH_m=\GH^{\vee m}$ is the spin-$m/2$ representation of $\SU(2)$. In particular, $\dim(\GH_m)=m+1$. The dequantization manifold is here $M=\Sb^2$, the classical $2$-sphere. The finite-time algebra $\Bi(\GH_m)$ is the ``fuzzy $2$-sphere" at deformation parameter $m\in\N$ \cite{Mad1}. 
\end{Example}

\begin{Example}[Projective spaces]\label{Exampleprojspaces}
More generally, with $n$ Kraus operators satisfying commutativity only, $\GH_m$ is an irreducible representation of $\SU(n)$ and the dequantization manifold 
$$
M=\C\Pb^{n-1}
$$
is the projective $n$-space. The isomorphism $\C\Pb^1\cong\Sb^2$ recovers Example \ref{sphereex}. Just as for $n=2$, the fuzzy projective spaces $\Bi(\GH_m)$ for higher $n$ have been widely studied \cite{BDLMC}, \cite{KuSa1}. 
\end{Example}
\begin{Example}[$q$-deformed projective spaces] Taking one more step towards higher generality, $n$ Kraus operators satisfying the $q$-commutation relations of the first column in the defining representation of the quantum group $\SU_q(n)$ gives a dequantization manifold
$$
\M=\C\Pb^{n-1}_q
$$
which is known as the $q$-deformed projective $n$-space \cite{KhMo1}. Again, the corresponding fuzzy manifolds $\Bi(\GH_m)$ are not completely unknown.
\end{Example}
\begin{Example}[Coadjoint orbits]\label{Examplecoadjoint}
For Kraus operators satisfying the relations of the coordinate functions on a compact Lie group $G$, the dequantization manifold is a coadjoint orbit $M\cong G/K$. We recover Example \ref{Exampleprojspaces} if we take $G=\SU(n)$. Then $K=\Un(1)\times\SU(n-1)$ and $G/K$ is complex projective $n$-space $\C\Pb^{n-1}$.
\end{Example}
\begin{Example}[Quantum homogeneous spaces]\label{Exampleqspaces}
Building on the last example, suppose that the Kraus operators satisfy the relations of the first column (or row) of the defining unitary representation $u\in\Un(n)\otimes C(\G)$ of a compact matrix quantum group $\G$ (see \cite{KlS}, \cite{Wor1} for the notion of quantum groups). Again, the dequantization manifold, which we by analogy of Example \ref{Examplecoadjoint} write as $\G/\K:=\M$, is a ``quantum homogeneous space", in the sense that $C(\G)$ coacts ergodically on $C(\G/\K)$. For these results we need to assume that normal ordering is possible in $C(\G/\K)$. Note that if $\Hi_0$ is finite-dimensional then $\G$ must be finite, i.e. $C(\G)$ is a finite-dimensional $C^*$-algebra.
\end{Example}
\begin{Example}[Half-commutative]\label{halflibex}
If $K_j^*=K_j$ and $K_jK_kK_l=K_lK_kK_j$ for all $j,k,l=1,\dots,n$ then $\M$ is in fact an ordinary manifold, namely $\C\Pb^{n-1}$ \cite{BaGo1}. 
\end{Example}
\begin{Example}[Free commutation relations]\label{Examplefreegroup}
Consider now Kraus operators satisfying free commutation relations, i.e. no relations at all. The algebra $C(\M)$ is the $\text{UHF}(n^\infty)$ algebra, the $\Un(1)$-invariant part of the Cuntz algebra $\Oi_n$. The algebras $\Bi(\GH_m)$ are (as always) just finite-dimensional matrix algebras whereas the generators of $\Oi_n$ satisfy free commutation relations. So in this case $m\to\infty$ behaves in the same way as the large-$N$ limits reviewed in \cite{GMW1}. 
\end{Example}

\section{Summary and outlook}
Let us record the main points of our discussion.
\begin{enumerate}[(i)]
\item{A discrete semigroup $(\Phi^m)_{m\in\N_0}$ of quantum channels is a very general description of the time evolution of a quantum system.}
\item{The collection $\GH_\bullet=(\GH_m)_{m\in\N_0}$ of finite-dimensional Hilbert spaces coming from the Stinespring representations of $(\Phi^m)_{m\in\N_0}$ has a convenient structure and depends only on the map $\Phi$, not on the choice of Kraus operators.}
\item{The bath model $\GH_\N=\bigoplus_{m\in\N_0}\GH_m$ depends on the particular dynamics and does not describe a quantum system (``bath") which exists on its own; it is the bath needed to account for the given dynamics $\Phi^m$.}
\item{The algebra $C^\infty(\M)$ of ``functions" on the dequantization manifold $\M$ is the infinite-time approximation of the bath observable algebras $\Bi(\GH_m)$ at finite times $m\in\N$. It can be interpreted as the algebra of functions on a manifold of infinite-time trajectories. }
\item{The information about an operator $A$ on $\Hi_0$ gathered up to time $m$ is contained in an operator $\Psi^{(m)}(A)$ on $\GH_m$, and the map $\Psi^{(m)}$ is the complementary channel of $\Phi$ (modified to depend on the relevant state $\rho_0$ on $\Hi_0$).} 
\item{Every operator $A$ on $\Hi_0$ dequantizes to the operator $\Psi(A)$ on Fock space $\GH_\N$ which acts on the subspace $\GH_m$ as $\Psi^{(m)}(A)$. With a new $\Phi$-dependent multiplication defined by taking the limit $m\to\infty$, the $\Psi(A)$'s form an algebra isomorphic to $C^\infty(\M)$.}\label{functfromobs}
\item{The maps $\breve{\varsigma}^{(m)}$ and $\varsigma^{(m)}$ which defines the noncommutative Berezin quantization of $C(\M)$ can be identified with well-known objects in quantum information theory.}
\item{The restriction of the state $\Tr(\rho_0\cdot)$ to $C(\M)$ is recovered as the infinite-time limit of the correlation matrices $Q_m/\Tr(Q_m)$.}
\item{For channels $\Phi$ with a certain symmetry under a compact classical or quantum group $\G$, the dequantization manifold is a quantum homogeneous space $\G/\K$ under this group.}\label{grouppoint}
\end{enumerate}
 
In the case of \eqref{grouppoint}, the channel $\Phi$ gives rise to a structure resembling that of algebraic superselection theory \cite{DHR1} and there is a kind of group duality not entirely different from \cite{DR1}, as will be discussed  in a separate paper. 

When the dequantization manifold is an honest manifold, it carries a lot of geometric structure. One may ask for something similar with noncommuting Kraus operators. That is indeed possible by means of noncommutative geometry, and future work will be devoted to develop further the geometry of dequantization manifolds and their fuzzy versions. The perhaps most intriguing point is that the eigenvectors of a suitable ``Laplacian" on $\M$ are the inequivalent $\Phi$-trajectories, and so the Gibbs partition function associated with this Laplacian coincides with the partition function defined as the integral over all paths (Feynman integral). 

Building on the physical interpretations of the Kraus operators discussed in \cite{An4}, one may try to find real systems with given dequantization manifolds. Doing so would further deepen our understanding of dissipative evolutions and check the validity of quantum channels as models for them.

\section*{References}

\bibitem[ADZ]{ADZ} Aharonov Y, Davidovich L, Zagury N. Quantum random walks. Physical Review A. Vol 48, Issue 2, p. 1687 (1993).

\bibitem[AlFa1]{AlFa1} Alicki R, Fannes M. Quantum dynamical systems. Oxford University Press, Oxford (2001).

\bibitem[AlLe1]{AlLe1} Alicki R, Lendi K. Quantum dynamical semigroups and applications. Lecture Notes in Phys. Vol 717, Springer-Verlag Berlin Heidelberg (2007).

\bibitem[An4]{An4} Andersson A. Detailed balance as a quantum-group symmetry of Kraus operators. arXiv: 1506.00411 (2015). 

\bibitem[An6]{An6} Andersson A. Berezin quantization of noncommutative projective varieties. arXiv: submit/1271363 (2015). 

\bibitem[Ar3]{Ar3} Arveson W. The index of a quantum dynamical semigroup. Journal of functional analysis. Vol 146, Issue 2, pp. 557-588 (1996).

\bibitem[AtPa1]{AtPa1} Attal S, Pautrat Y. From repeated to continuous quantum interactions. Annales Henri Poincaré. Vol 7, Issue 1, pp. 59-104. Birkhäuser-Verlag (2006).

\bibitem[BDLMC]{BDLMC} Balachandran AP, Dolan BP, Lee JH, Martin X, O’Connor D. Fuzzy complex projective spaces and their star-products. J. Geom. Phys. Vol 43, pp. 184 (2002).

\bibitem[BKV1]{BKV1} Balachandran AP, Kürkçüoglu S, Vaidya S. Lectures on fuzzy and fuzzy SUSY physics. Singapore: World Scientific (2007).

\bibitem[BaGo1]{BaGo1} Banica T,  Goswami D. Quantum isometries and noncommutative spheres. Comm. Math. Phys. Vol 298, Issue2, pp. 343-356 (2010). 

\bibitem[Bere2]{Bere2} Berezin FA. General concept of quantization. Commun. math. Phys. Vol 40, pp. 153-174 (1995).

\bibitem[Bere3]{Bere3} Berezin FA. Covariant and contravariant symbols of operators. Mathematics of the USSR-Izvestiya. Vol 6, Issue 5, p. 1117 (1972).

\bibitem[BlKi1]{BlKi1} Blackadar B, Kirchberg E. Generalized inductive limits of finite-dimensional $C^*$-algebras. Math. Ann. Vol 307, pp. 343-380 (1997).

\bibitem[BeSl1]{BeSl1} Berceanu S, Schlichenmaier M. Coherent state embeddings, polar divisors and Cauchy formulas. J.Geom. Phys. Vol, Issue 34, pp. 336-358 (2000).

\bibitem[BMS]{BMS} Bordemann M, Meinrenken E Schlichenmaier M. Toeplitz quantization of Kähler manifolds and $\operatorname{gl}(n),n\to\infty$ limits. Commun. Math. Phys. Vol 165, Issue 2, pp. 281–296 (1994).

\bibitem[BP]{BP} Breuer HP, Petruccione F. The theory of open quantum systems. Oxford University Press, Oxford (2003).

\bibitem[BBP1]{BBP1} Bruneau L, De Bievre S, Pillet CA. Scattering induced current in a tight-binding band. Journal of Mathematical Physics. Vol 52, Issue 2, p. 022109 (2011).

\bibitem[CGR1]{CGR1} Cahen M, Gutt S, Rawnsley J. Quantization of Kähler manifolds I: geometric interpretation of Berezin's quantization. JGP Vol 7, Issue 1 (1990). 

\bibitem[CGLM]{CGLM} Caruso F, Giovannetti V, Lupo C, Mancini S. Quantum channels and memory effects. Reviews of Modern Physics. Vol 86, Issue 4, p. 1203 (2014).

\bibitem[Choi1]{Choi1} Choi MD. Completely positive Linear maps on complex matrices. Lin. Alg. Appl. Vol 10, pp. 285–290 (1975).

\bibitem[Co]{Co} Connes A. Noncommutative geometry. Academic Press, Inc. (1994).

\bibitem[DFH1]{DFH1} Datta N, Fukuda M, Holevo AS. Complementarity and additivity for covariant channels. Quantum Information Processing. Vol 5, Issue 3, pp. 179-207 (2006).

\bibitem[DRS1]{DRS1} Davidson KR, Ramsey C, Shalit OM. The isomorphism problem for some universal operator algebras. Advances in Mathematics. Vol 228, Issue 1, pp. 167-218 (2011).

\bibitem[DHR1]{DHR1} Doplicher S, Haag R, Roberts JE. Fields, observables and gauge transformations I. Commun. Math. Phys. Vol 13, pp. 1-23 (1969).

\bibitem[DR1]{DR1} Doplicher S, Roberts JE. Endomorphisms of $C^*$-algebras, crossed products and duality for compact groups. Ann. Math. 130, 75–119 (1989).

\bibitem[GMW1]{GMW1} Guhr T, Müller-Groeling, A, Weidenmüller HA. Random-matrix theories in quantum physics: common concepts. Physics Reports. Vol 299, Issue 4, pp. 189-425 (1998).

\bibitem[Hawk1]{Hawk1} Hawkins E. Quantization of equivariant vector bundles. Commun. Math. Phys. Vol 202, Issue 3, pp. 517-546 (1999).

\bibitem[KhMo1]{KhMo1} Khalkhali M,  Moatadelro A. Noncommutative complex geometry of the quantum projective space. J. Geom. Phys. Vol 61, pp. 2436-2452 (2011).

\bibitem[KMNR1]{KMNR1} King C, Matsumoto K, Nathanson M, Ruskai MB. Properties of conjugate channels with applications to additivity and multiplicativity. arXiv: quant-ph/0509126 (2005).

\bibitem[KlS]{KlS} Klimyk AU, Schmüdgen K. Quantum groups and their representations. Vol. 552, Springer, Berlin (1997).

\bibitem[KSW1]{KSW1} Kretschmann D, Schlingemann D, Werner RF. The information-disturbance tradeoff and the continuity of Stinespring's representation. Information Theory, IEEE Transactions on. Vol 54, Issue 4, pp. 1708-1717 (2008).

\bibitem[KuSa1]{KuSa1} Kürkçüoglu S, Sämann C. Drinfeld twist and general relativity with fuzzy spaces. Classical and Quantum Gravity. Vol 24, Issue 2, p. 291 (2007).

\bibitem[Lan1]{Lan1} Landsman NP.  Mathematical topics between classical and quantum mechanics.  Springer (1998).

\bibitem[Lind1]{Lind1} Lindblad G. On the generators of quantum dynamical semigroups. Commun. Math. Phys. Vol 48, pp. 119-130 (1976).

\bibitem[Mad1]{Mad1} Madore J. An introduction to noncommutative differential geometry and its physical applications. Cambridge
University Press, Second edition (1999).

\bibitem[MuSa1]{MuSa1}  Murray S, Sämann C. Quantization of flag manifolds and their supersymmetric extensions. Adv. Theor.
Math. Phys. Vol 12, pp. 641-710 (2008).

\bibitem[NG]{NG} Navascués M, Pérez-García D. Sequential strong measurements and the heat vision effect. New J. Phys. Vol 13, 113038 pp. 1-18 (2011).

\bibitem[Schl1]{Schl1} Schlichenmaier M. Berezin-Toeplitz quantization for compact Kähler manifolds. A review of results. Adv. Math.l Phys. (2010).

\bibitem[ShSo1]{ShSo1} Shalit OM, Solel B. Subproduct systems. Documenta Mathematica. Vol 14, pp. 801-868 (2009).

\bibitem[VeAn1]{VeAn1} Venegas-Andraca SE. Quantum walks: a comprehensive review. Quantum Information Processing. Vol 11, Issue 5, pp. 1015-1106 (2012).

\bibitem[WoCi1]{WoCi1} Wolf MM, Cirac JI. Dividing quantum channels. Commun. Math. Phys. Vol 279, Issue 1, pp. 147-168 (2008).

\bibitem[Wor1]{Wor1} Woronowicz SL. Compact matrix pseudogroups. Commun. Math. Phys. Vol 111, pp. 613-665 (1987).

\end{document}